\theoremstyle{plain}
\newtheorem{theorem}{Theorem}[section]
\newtheorem{lemma}[theorem]{Lemma}
\newtheorem{corollary}[theorem]{Corollary}
\theoremstyle{definition}
\theoremstyle{remark}
\newtheorem{remark}{Remark}
\newcommand{\C}{\mathbb{C}}
\begin{document}

\title[periodic solutions]
      {In search of periodic solutions for a reduction of the Benney chain}

\date{20 April 2017}
\author{Misha Bialy and Andrey E. Mironov}
\address{M. Bialy, School of Mathematical Sciences,
Raymond and Beverly Sackler Faculty of Exact Sciences, Tel Aviv University,
Israel} \email{bialy@post.tau.ac.il}
\address{A.E. Mironov,  Sobolev Institute of Mathematics,
Academician Koptyug avenue 4, 630090,
Novosibirsk, Russia and Department of Mathematics and Mechanics,
Novosibirsk State University,
Pirogov street 2, 630090 Novosibirsk, Russia}
\email{mironov@math.nsc.ru}
\thanks{M.B. was supported in part by ISF grant 162/15 and A.M. was
supported by RSF (grant
14-11-00441). It is our pleasure to thank these funds for the
support}

\subjclass[2000]{35L65,35L67,70H06 } \keywords{Benney chain, genuine
nonlinearity, blow-up, polynomial integrals}

\begin{abstract} We search for smooth periodic solutions for the system
of quasi-linear PDEs known as the Lax dispersionless reduction of the Benney
moments chain. It is naturally related to the existence of a
polynomial in momenta integral for a Classical Hamiltonian system
with 1,5 degrees of freedom. For the solution in question it is not
known a priori if the system is elliptic or hyperbolic or of mixed
type. We consider two possible regimes for the solution. The first
is the case of only one real eigenvalue, where we can completely
classify the solutions. The second case of strict Hyperbolicity is
really a challenge.  We find a remarkable 2 by 2 reduction which is
strictly Hyperbolic but violates the condition of genuine
non-linearity.
\end{abstract}

\maketitle

\section{Motivation and the results}

\label{sec:intro} The famous equations of Benney moments are an
infinite system of PDEs on the functions $A^k(t,x),\  k=0,1,2...$
$$
A^k_t+A^{k+1}_x+kA^{k-1} A^0_x=0, \ k=0,1,2, . . . .
$$
(\cite {benney}, \cite {kupershmidt-manin}). It admits many
reductions where infinitely many functions $A^n$ become the
functions on finitely many field variables $U=(u_1,..,u_n)$ (see
\cite{lorenzoni}, \cite{gt}, \cite{gk},
\cite{pavlov}). In this paper we deal with one of the reductions
called the dispersionless Lax reduction, where $A_i$ are expressed via
$U=(u_1,..,u_n)$ by the formula:

$$p+\sum_{k=0}^{\infty}\frac{A^k}{p^{k+1}}=
\left(p^{n+1}+(n+1)u_1p^{n-1}+...+(n+1)u_n\right)^{\frac{1}{n+1}}.$$
It can be checked that the equations on $(u_1,..,u_n)$
can be written explicitly as a quasi-linear system of the form:
\begin{equation}
  \label{system}
  U_t+A(U)U_x=0 , \
  A(U)=-
  \begin{pmatrix}
  0 & -1 & 0 &  \cdots & 0 & 0 \\[1mm]
  {(n-1)}u_1 & 0 & -1 & 0 & \cdots & 0 \\[1mm]
  \vdots & \vdots & \vdots & \vdots & \vdots & \vdots \\
  2u_{n-2} & 0 & 0 & \cdots & 0 & -1\\
  u_{n-1} & 0 & 0 & \cdots & 0 & 0
  \end{pmatrix},
\end{equation}
for unknown functions $U=(u_1 ,.., u_n)^t$.

This is a remarkable Hamiltonian system of the hydrodynamic type
related to the $A_n$ singularity (see \cite{dubrovin-novikov},
\cite{dubrovin}, \cite{strachan}, \cite{lorenzoni}).
Though,
 local/formal solutions for this system can be studied by the so called
 generalized hodograph method (\cite{tsarev}, \cite{gk}),
 very little is known on the existence of smooth solutions for
 (\ref{system}) globally.
 General belief is that they are very rare.
 In \cite{bialy1}, \cite{bialy2}, \cite{bialy3} the global analysis of periodic
 smooth solution was started for (\ref{system}) and also in (\cite{BM1}, \cite{BM2}) for
 for another Semi-hamiltonian system corresponding to geodesic flows. Here we continue in this direction
 further.

 It was observed in \cite{kozlov} and later in \cite{bialy1}, \cite{bialy2}, \cite{bialy3} that the question of existence
 of smooth periodic solution for (\ref{system}) is ultimately related to the search of
 polynomial integrals for a Hamiltonian system
  as we now turn to explain.

Let $ H=p^2/2+u(t,x) $ be a Hamiltonian of a $1,5$-degrees of
freedom system with the potential $u$ which is assumed throughout this
paper to be $C^2-$smooth periodic function in both variables. The only known examples of   integrable Hamiltonian
of this form with periodic potential functions $u$, are those having the potential $u$ of the form of traveling waves:
$u=u(mx+nt)$.

More precisely we want to find all those potential functions
$u(t,x)$ for which there exists an additional function $F(p,x,t)$
invariant under the Hamiltonian flow (such an $F$ is called the
first integral of motion). Let us stick to the case where $F$ is a
polynomial in the variable $p$ of a given degree, say $(n+1)$,
having all the coefficients $C^2-$smooth, periodic in $x$ and $t$.
Write
$$ F(p,x,t)=u_{-1}p^{n+1}+u_{0}p^n+u_1p^{n-1}+\cdots +u_n,$$
and substitute to the equation of conservation of $F$
\begin{equation}
 \label{eq:conservation}
 F_t+pF_x-u_xF_p=0.
\end{equation}
Equating to zero the coefficients of various powers of $p$, one
easily obtains the following information. The coefficient $u_{-1}$
must be a constant, which can be normalized to be ${1\over n+1}$.
Also $u_0$ must be a constant, which we shall assume to be zero
(this can be achieved by a linear change of coordinates on the
configuration space $\mathbb{T}^2$). Moreover the coefficient $u_1$
satisfies $(u_1)_x=(u)_x$. Therefore, $u_1$ and $u$ will be assumed
to be equal (the addition of any function of $t$ to the potential
$u$ does not change the Hamiltonian equations). Moreover, the
functions $U=(u_1,...,u_n)$ satisfy precisely the system
(\ref{system}).

It is remarkable that the system (\ref{system}) is Hamiltonian and in particular
belongs to the class of Semi-hamiltonian or Rich. This means it can
be written in terms of Riemann invariants and has infinitely many
conservation laws (\cite{tsarev}, \cite{Sev}, \cite{Serre}).

Let us emphasize that the study of smooth solutions of the system is
a very challenging one, for the two reasons: the first reason is
that the system is of mixed type a-priori, since it depends on the
solution in question. Moreover, in the strictly hyperbolic region it fails to be
genuinely non-linear, because the sign of the non-linearity for some
eigenvalues can change.

It is important to observe that the characteristic polynomial
 of $A(U)$ has very clear geometric
meaning \cite{bialy1}, \cite{bialy2}, \cite{bialy3}: it coincides with
derivative $F_p$ of $F$, thus giving the information on the phase
portrait of the system with the Hamiltonian $H$. Namely, the graph of an
eigenvalue $p=\lambda(t,x)$ has the property that invariant torii of
the hamiltonian flow have vertical tangents at the points on the
graph.

 The purpose of this paper is to study two opposite possible
 regimes for the system (\ref{system}).
 In the first part we consider the case when only one
 eigenvalue of the matrix $A(U)$ is real and the rest are
 complex conjugate pairs (not real).
 In the second part we deal with the strictly hyperbolic regime, when all eigenvalues of the matrix $A(U)$ are real and distinct.
 
  We turn now to  formulation of our result for the first case. Let us remark first that the assumption of one real eigenvalue for the system (\ref{system}) is very
 natural in view of phase portrait of autonomous
 Hamiltonian with 1-degree of freedom, where only one chain of separatix islands is present.

 Notice that complex eigenvalues are allowed to collide
 for some $(t,x)$, however we shall assume, in this case, that
 the characteristic polynomial can be factorized in a continuous way,
 see remarks below. Our main result in this case is that the only
 solutions in this regime are autonomous ones. This is formulated in
 the following theorem:

\begin{theorem}
 \label{main theorem} {\it Let $n=2l+1$ be odd. Assume that periodic solution $U(t,x)$ be such that
 the matrix $A(U)$ has one
 real and $l$ complex conjugate pairs of eigenvalues for every $(t,x)$.
 We assume that
 the characteristic polynomial of $A(U)$ can be continuously
 factorized:
 $$
F_p=(p-\mu)\left((p-\lambda_1)(p-\bar{\lambda}_1)...(p-\lambda_l)
(p-\bar{\lambda}_l)\right),
 $$
where $\mu(t,x), \lambda_i(t,x)$ are continuous functions and $\mu$
is real and $\lambda_i, \bar{\lambda}_i$ are complex conjugate
pairs.
 Then the
solution $U$ of
 quasi-linear system (\ref {system}) is the traveling wave
 solution, where the components $(u_1,...,u_n)$ do not depend on $t$.}
\end{theorem}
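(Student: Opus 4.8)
The plan is to exploit the conservation law \eqref{eq:conservation}, which holds identically in $p$, by evaluating it at the roots of $F_p$. For each complex eigenvalue introduce the (complex-valued) function $H_i(x,t):=F(\lambda_i(x,t),x,t)$. Since $F_p(\lambda_i)=0$, the chain rule gives $(H_i)_x=F_x(\lambda_i)$ and $(H_i)_t=F_t(\lambda_i)$, while \eqref{eq:conservation} evaluated at $p=\lambda_i$ reads $F_t(\lambda_i)+\lambda_i F_x(\lambda_i)=0$. Hence each $H_i$ solves the complex transport equation $(H_i)_t+\lambda_i (H_i)_x=0$, and its conjugate $\bar H_i=F(\bar\lambda_i,x,t)$ solves the conjugate equation.

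The key step is to show that $H_i$ is constant. Write $\lambda_i=a+\I b$ and $H_i=P+\I Q$; splitting the transport equation into real and imaginary parts gives $P_t+aP_x-bQ_x=0$ and $Q_t+aQ_x+bP_x=0$. Multiplying the first by $Q_x$, the second by $P_x$ and subtracting, the terms containing $a$ cancel and one obtains $P_tQ_x-Q_tP_x=b(P_x^2+Q_x^2)$. The left-hand side is the exact form $\partial_t(PQ_x)-\partial_x(PQ_t)$, so its integral over the torus $\mathbb{T}^2$ vanishes; therefore $\int_{\mathbb{T}^2} b(P_x^2+Q_x^2)\,dx\,dt=0$. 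Since $\lambda_i$ is never real, $b=\mathrm{Im}\,\lambda_i$ is continuous and nonvanishing, hence of one sign on the connected torus, which forces $P_x=Q_x\equiv 0$ and then $P_t=Q_t\equiv0$. Thus $F_x(\lambda_i)=F_t(\lambda_i)=0$ for every $i$, and by conjugation also at $\bar\lambda_i$.

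Next I would pass from these scalar vanishings to an identity of polynomials in $p$. Both $F_x$ and $F_t$ are polynomials in $p$ of degree $n-1$ with leading coefficients $(u_1)_x$ and $(u_1)_t$, and they vanish at the $n-1=2l$ points $\lambda_i,\bar\lambda_i$; comparing with $Q(p):=F_p/(p-\mu)=\prod_i(p-\lambda_i)(p-\bar\lambda_i)$ gives $F_x=(u_1)_x\,Q$ and $F_t=(u_1)_t\,Q$ (first on the open set where the roots are simple, then everywhere by continuity). Substituting into \eqref{eq:conservation} and cancelling the common factor $Q$ yields $(u_1)_t+\mu(u_1)_x=0$, while comparing coefficients in $F_x=(u_1)_xQ$ and $F_t=(u_1)_tQ$ shows that $U_x$ and $U_t$ are the scalar multiples $(u_1)_x$ and $(u_1)_t$ of the single vector $(1,q_1,\ldots,q_{n-1})^t$ of coefficients of $Q$. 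Consequently $U_t=-\mu U_x$, i.e. $U_t+\mu U_x=0$.

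Finally, since $\mu=\mu(U)$ and $U_t+\mu U_x=0$, the chain rule gives the scalar Burgers equation $\mu_t+\mu\mu_x=0$ for the single real eigenvalue. A globally smooth solution of Burgers that is periodic in $t$ must be constant, since otherwise its gradient blows up along crossing characteristics; hence $\mu\equiv c$ and $U_t+cU_x=0$, so $U$ is a traveling wave $U(x-ct)$ which in the co-moving frame does not depend on $t$. I expect the main obstacle to be precisely this last step — rigorously excluding nonconstant periodic solutions of the reduced scalar equation — together with the regularity needed to run the torus integration at points where the complex eigenvalues collide; the continuous-factorization hypothesis is exactly what lets me treat $\lambda_i$ as a genuine function there and extend the polynomial identities by continuity.
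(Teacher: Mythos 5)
Your middle sections are essentially correct and take a genuinely different route from the paper's. For the constancy of the complex invariants $H_i$, the paper rewrites the complex transport equation as a Beltrami system and invokes the theory of quasi-conformal maps (representation theorem plus maximum principle, Section 3), whereas your torus integration of the Jacobian identity $P_tQ_x-Q_tP_x=b\,(P_x^2+Q_x^2)$ is more elementary and self-contained; it is correct once $P,Q\in C^1$, since $\int_{\mathbb{T}^2}(P_tQ_x-Q_tP_x)\,dx\,dt=0$ follows by mollification and Stokes (and, as the paper's remark after Theorem 3.1 notes, one may have to pass to a finite cover of the torus to make $H_i$ single-valued). But the $C^1$ regularity you flag at the end is a real issue, and it is exactly what the paper's Theorem 2.1 supplies: at a critical point of multiplicity $m$ one writes $F(p,y_0)=(p-\lambda(y_0))^{m+1}f(p)+r(y_0)$ and uses H\"older-$\frac{1}{m}$ continuity of roots to show $r\in C^1$ with $\partial_{y_i} r=\partial_{y_i}F(\lambda,y)$; your chain-rule derivation of $(H_i)_t+\lambda_i(H_i)_x=0$ must be routed through this statement rather than through differentiability of $\lambda_i$. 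Similarly, your divisibility step $F_x=(u_1)_x\,Q$, $F_t=(u_1)_t\,Q$ ``first on the open set where the roots are simple, then everywhere by continuity'' is not sufficient as stated, because nothing prevents two of the $\lambda_i$ from coinciding on an open set; there you need $F_x$ to vanish at a multiple critical point with the full multiplicity, which requires either a local factorization argument on open sets where the multiplicity pattern is constant, or the paper's substitute, namely Lemma 4.1 (critical values as local coordinates on the strata $\Lambda_{m_1...m_k}$) combined with the interior/closure argument of Theorem 4.2. These defects are patchable, and once patched your computations replace the paper's Sections 3 and 4; your Burgers/crossing-characteristics argument for $\mu\equiv\mathrm{const}$ is the same as the paper's Theorem 5.1 (characteristics are straight lines which must be parallel).

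The genuine gap is the endgame. The theorem asserts that the components $(u_1,\dots,u_n)$ do not depend on $t$ --- that is, the solution is autonomous, $\mu\equiv 0$ --- not merely that $U$ is a traveling wave of some unspecified speed $c$; your closing sentence ``in the co-moving frame does not depend on $t$'' silently weakens the conclusion to something every traveling wave trivially satisfies. The paper closes this with a separate argument (Theorem 5.2): substituting $U=U(x-\mu t)$ into (\ref{system}) shows $U'$ is an eigenvector of $A(U)$ with eigenvalue $\mu$; the first $n-1$ equations integrate successively, expressing $u_2,\dots,u_n$ as polynomial expressions in $u_1$ with constant coefficients, and substituting these into the characteristic equation $\mu^n+(n-1)u_1\mu^{n-2}+\dots+u_{n-1}=0$ yields the alternative: either $\mu=0$, or $u_1$ satisfies a nontrivial polynomial equation with constant coefficients and hence is constant, making the whole solution constant. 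Without this step your proof does not reach the stated conclusion.
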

\begin{proof}[Remarks]

1. Let us mention that for the case of even $n$ when all eigenvalues
of $A(U)$ are complex conjugate pairs (not real), it was proved in
\cite{bialy2} that $U$ in this case must be a constant solution.
Taking into account the geometric meaning of the eigenvalues
mentioned above, one can interpret this result as a reflection of
the so called Hopf rigidity phenomenon known in Riemannian geometry.

2. We don't know if the condition of continuous factorization of
$F_p$ used in Theorem 1.1 is really essential for the result. In
smooth 1-parameter family of polynomials, roots can be chosen
continuously but in 2-parameter family this does not necessarily
hold due to monodromy effect. Continuous factorization obviously
holds in the case of all distinct roots.

\end{proof}

Our second result deals with the strictly hyperbolic case. 

In this case it appears to be a difficult problem to classify
possible smooth periodic solutions. The reason for this lies in the
fact that genuine non-linearity of the eigenvalues cannot be
guarantied in general. To emphasize this fact we introduce in this
paper a remarkable 2 by 2 reduction of the system (\ref{system}) for
$n=4$ and thus a reduction of the infinite Benney chain as well. We
prove:
\begin{theorem} The following quasi-linear system is a reduction of
(\ref{system}) for $n=4$:
$$
\begin{cases}
w_t+(wv)_x=0\\
v_t+\left(\frac{w^2}{2}-\frac{3v^2}{2}\right)_x=0.
\end{cases}
$$
It is strictly hyperbolic outside the origin in the $(u,v)$ plane.
Moreover the sign of the nonlinearity changes when $v$ is changing
sign.
\end{theorem}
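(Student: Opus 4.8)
The plan is to verify the three claims in sequence: (i) that the $2\times 2$ system genuinely arises as a reduction of \eqref{system} for $n=4$, (ii) strict hyperbolicity away from the origin, and (iii) the change of sign of the genuine nonlinearity.

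\textbf{Step 1: Exhibiting the reduction.} For $n=4$ the field is $U=(u_1,u_2,u_3,u_4)$ and the matrix $A(U)$ is the explicit $4\times 4$ companion-type matrix from \eqref{system}. The first thing I would do is look for an invariant submanifold of the quasilinear system on which the four components $u_i$ are expressed through only two functions $w$ and $v$. The natural ansatz, motivated by the traveling-wave/scaling structure of the $A_n$ singularity, is to set $u_1=w$ and $u_2=v$ (or some affine combination) and then impose that $u_3,u_4$ be prescribed functions of $(w,v)$ — most plausibly polynomial expressions such as $u_3=\alpha wv$, $u_4=\beta w^2+\gamma v^2+\cdots$. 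The constraint is that this ansatz must be preserved by the flow \eqref{system}: substituting into all four equations $(u_i)_t+[A(U)U_x]_i=0$ must produce only the two given evolution equations for $w,v$ together with \emph{consistent} (not independent) equations for $u_3,u_4$. Concretely I would write out the four scalar conservation laws coming from $A(U)$, substitute the ansatz, and match coefficients to solve for the unknown constants $\alpha,\beta,\gamma,\dots$. Once the correct substitution is found, the verification that the reduced system is exactly
\begin{equation*}
w_t+(wv)_x=0,\qquad v_t+\Bigl(\tfrac{w^2}{2}-\tfrac{3v^2}{2}\Bigr)_x=0
\end{equation*}
is a direct computation.

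\textbf{Step 2: Strict hyperbolicity.} Writing the reduced system in the form $\binom{w}{v}_t+B(w,v)\binom{w}{v}_x=0$, the flux is $\bigl(wv,\ \tfrac{w^2}{2}-\tfrac{3v^2}{2}\bigr)$, so the Jacobian is
\begin{equation*}
B=\begin{pmatrix} v & w\\ w & -3v\end{pmatrix}.
\end{equation*}
Its characteristic polynomial is $\lambda^2+2v\lambda-(3v^2+w^2)=0$, giving eigenvalues $\lambda_{\pm}=-v\pm\sqrt{4v^2+w^2}$. The discriminant $4v^2+w^2$ is strictly positive for $(w,v)\neq(0,0)$, so the two eigenvalues are real and distinct everywhere except the origin; this is strict hyperbolicity outside the origin, as claimed. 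I would record the eigenvalues and the corresponding right eigenvectors explicitly, since they are needed for Step 3.

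\textbf{Step 3: Failure of genuine nonlinearity.} Genuine nonlinearity of the characteristic field $\lambda_{\pm}$ means $\nabla\lambda_{\pm}\cdot r_{\pm}\neq 0$, where $r_{\pm}$ is the associated right eigenvector. The plan is to compute $\nabla_{(w,v)}\lambda_{\pm}$ from the closed-form eigenvalues, compute $r_{\pm}$ as the kernel of $B-\lambda_{\pm}I$, and form the scalar product. I expect this to reduce, after simplification, to an expression whose sign is governed by a single factor proportional to $v$ (possibly times a positive function of $w,v$), so that it vanishes on the line $v=0$ and flips sign as $v$ crosses zero. \emph{This is the step I expect to be the main obstacle}, not because it is conceptually deep but because the algebra with the nested square roots in $\lambda_{\pm}$ is where sign errors hide; I would manage it by differentiating the identity $\lambda^2+2v\lambda-(3v^2+w^2)=0$ implicitly rather than differentiating the radical directly, which keeps the expressions rational. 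Locating the zero set of $\nabla\lambda_{\pm}\cdot r_{\pm}$ and checking that it is crossed transversally (so the sign genuinely changes, rather than merely touching zero) completes the proof and justifies the paper's assertion that this reduction, while strictly hyperbolic, violates genuine nonlinearity.
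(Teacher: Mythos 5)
Your Steps 2 and 3 are correct and essentially identical to the paper's verification: the Jacobian of the flux is $\left(\begin{smallmatrix} v & w\\ w & -3v\end{smallmatrix}\right)$, the eigenvalues are $\lambda_{1,2}=-v\pm R$ with $R=\sqrt{4v^2+w^2}$, so strict hyperbolicity holds off the origin, and for $\lambda_1$ with eigenvector $\xi_1=(w,\,R-2v)$ one finds $d\lambda_1(\xi_1)=\frac{6v(R-2v)}{R}$, whose sign for $w\neq 0$ is the sign of $v$ (the paper differentiates the radical directly; your implicit-differentiation variant is equivalent and fine).

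The genuine gap is Step 1, which is the actual content of the claim that the $2\times 2$ system is a reduction of (\ref{system}) for $n=4$: you never produce the reduction, and the concrete ansatz you propose cannot produce it, because the embedding of $(w,v)$ into $(u_1,u_2,u_3,u_4)$ is not linear or affine in $(w,v)$. Indeed, for $n=4$ the first equation of (\ref{system}) reads $(u_1)_t+(u_2)_x=0$; with your identification $u_1=w$, matching it against $w_t+(wv)_x=0$ forces $u_2=wv+\mathrm{const}$, and then the second equation $(u_2)_t-3u_1(u_1)_x+(u_3)_x=0$, after eliminating time derivatives via the target system (which gives $(wv)_t=-(v^2+w^2)w_x+2wv\,v_x$), demands $(u_3)_x=(3w+w^2+v^2)w_x-2wv\,v_x$; since $\partial_v(3w+w^2+v^2)=2v\neq -2v=\partial_w(-2wv)$, this $1$-form is not closed and no function $u_3(w,v)$ exists. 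So coefficient-matching in your proposed family terminates with no solution. The missing idea is the paper's: impose that the quintic integral has two double roots, $F=\tfrac15(p-f)^2(p-g)^2(p-a)$ with $a=-2(f+g)$ and $u=-\tfrac15(3f^2+4fg+3g^2)$ — geometrically, this sets to zero the two Riemann invariants attached to the extreme eigenvalues, which are the genuinely nonlinear ones. The graphs $p=f$ and $p=g$ are then level curves of $F$, hence invariant under the Hamiltonian flow, giving the Burgers-type equations $f_t+ff_x+u_x=0$ and $g_t+gg_x+u_x=0$; the substitution $h=\tfrac{f-g}{2}$, $q=\tfrac{f+g}{2}$ followed by $w=h/\sqrt5$, $v=q$ yields exactly the stated system, with the induced embedding quasi-homogeneous of weights $\deg u_k=k+1$ (e.g.\ $u_1=-2(w^2+v^2)$). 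Without this factorization device, or at least a general-degree quasi-homogeneous ansatz in place of your affine one, Step 1 does not go through.
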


We believe that understanding of the behaviour of this reduced
system can shed light in the search of smooth periodic solutions for
Hyperbolic regime in general. We prove Theorem 1.2. in Section 6 .
\section*{Acknowledgements}
It is a pleasure to thank Sasha Veselov, Misha Sodin and Inna
Scherbak for very useful discussions. It was very helpful to consult
with Jenia Shustin who also kindly communicated to us a simple proof
of Lemma 4.1 in Section 4.

\section{Riemann invariants and critical values of polynomials}
We start with the following crucial fact on critical values of smooth
family of polynomials, which we believe is of independent interest. 
\begin{theorem}
Let $F(p,y)$ be a family of polynomials smoothly depending on the
parameters $y=(y_1,...,y_n)$ of the form
$$F=\frac{1}{n+1}p^{n+1}+u_1(y)p^{n-1}+...+u_n(y).$$
Assume that $\lambda(y)$ is a (complex) critical point of the
polynomial $F$ continuously depending on $y$. Then the corresponding critical
value
$$ r(y):=F(\lambda(y),y)$$
is a $C^1$ function of $y$.
\end{theorem}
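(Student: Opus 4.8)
The plan is to identify the candidate gradient and then prove differentiability directly, the only genuine difficulty occurring where $\lambda(y)$ is a multiple root of $F_p(\cdot,y)$. Since $\lambda(y)$ is a critical point, $F_p(\lambda(y),y)=0$, so formally $\partial_{y_i} r = F_p(\lambda,y)\,\partial_{y_i}\lambda + F_{y_i}(\lambda,y) = F_{y_i}(\lambda,y)$, the explicit partial derivative of $F$ in $y_i$ with $p$ frozen. Because of the coefficient structure, $F_{y_i}(p,y)=\sum_{k=1}^n \partial_{y_i}u_k(y)\,p^{\,n-k}$, so the candidate $\phi_i(y):=F_{y_i}(\lambda(y),y)$ is continuous (a polynomial in $p$ with smooth coefficients, evaluated at the continuous function $\lambda$). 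It therefore suffices to show that $r$ is differentiable at every $y$ with $\nabla r=\phi$; continuity of $\phi$ then upgrades this to $r\in C^1$.

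On the open set where $\lambda(y_0)$ is a simple root of $F_p(\cdot,y_0)$, i.e. $F_{pp}(\lambda_0,y_0)\neq0$, the implicit function theorem makes $\lambda$ smooth near $y_0$ and the chain rule gives $\nabla r=\phi$ outright. The heart of the matter is a point $y_0$ where $\lambda_0:=\lambda(y_0)$ is a root of $F_p(\cdot,y_0)$ of multiplicity $m\geq 2$, where $\lambda$ itself is typically not differentiable. Writing $\delta:=\lambda(y_0+h)-\lambda_0$ (so $\delta\to0$ as $h\to0$ by continuity of $\lambda$), the key quantitative estimate I would establish is
$$\abs{\delta}=O\!\left(\abs{h}^{1/m}\right).$$
This follows from the critical-point equation alone: since $F_p(\lambda_0+\delta,y_0+h)=0$ and $F_p$ is smooth in $y$, we have $F_p(\lambda_0+\delta,y_0)=F_p(\lambda_0+\delta,y_0)-F_p(\lambda_0+\delta,y_0+h)=O(\abs{h})$, while the Taylor expansion of the polynomial $F_p(\cdot,y_0)$ at its order-$m$ root gives $\abs{F_p(\lambda_0+\delta,y_0)}\geq\tfrac12\abs{c}\,\abs{\delta}^m$ for small $\delta$, with $c\neq0$; combining the two yields the bound. (In fact this argument is uniform in $m$ and reproves the simple-root case $m=1$ as well.)

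Finally I would expand $r(y_0+h)-r(y_0)=F(\lambda_0+\delta,y_0+h)-F(\lambda_0,y_0)$ and split it into a pure $p$-variation and a pure $y$-variation. The $p$-variation $F(\lambda_0+\delta,y_0)-F(\lambda_0,y_0)$ equals $\tfrac{c}{m+1}\delta^{m+1}+O(\delta^{m+2})$, because $\lambda_0$ is a critical point of order $m$, and by the estimate above this is $O(\abs{h}^{(m+1)/m})=o(\abs{h})$. The $y$-variation $F(\lambda_0+\delta,y_0+h)-F(\lambda_0+\delta,y_0)$ equals $\langle F_y(\lambda_0+\delta,y_0),h\rangle+O(\abs{h}^2)$, and expanding $F_y(\lambda_0+\delta,y_0)=F_y(\lambda_0,y_0)+O(\delta)=\phi(y_0)+O(\delta)$ turns it into $\langle\phi(y_0),h\rangle+O(\delta\abs{h})+O(\abs{h}^2)=\langle\phi(y_0),h\rangle+o(\abs{h})$. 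Hence $r(y_0+h)-r(y_0)=\langle\phi(y_0),h\rangle+o(\abs{h})$, which is differentiability at $y_0$ with $\nabla r=\phi$. The main obstacle is exactly this degenerate case, and the mechanism that defeats it is the one-power gain between $F_p$ (vanishing to order $m$) and $F$ (vanishing to order $m+1$) at $\lambda_0$: it makes the exponent $(m+1)/m$ strictly exceed $1$, so that the non-differentiable, $\abs{h}^{1/m}$-sized motion of $\lambda$ becomes invisible to the critical value and the latter stays $C^1$.
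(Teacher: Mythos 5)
Your proof is correct and follows the same route as the paper's: the identical splitting of the increment of $r$ into a $y$-variation and a $p$-variation (the paper's terms $A$ and $B$), the same factorization $F(p,y_0)=(p-\lambda_0)^{m+1}f(p)+r(y_0)$ coming from the order-$m$ vanishing of $F_p$, and the same decisive exponent $\frac{m+1}{m}>1$. The one genuine difference is how the estimate $|\lambda(y_0+h)-\lambda_0|=O(|h|^{1/m})$ is obtained: the paper simply cites the H\"{o}lder-$\frac{1}{m}$ continuity of polynomial roots from \cite{polynom}, whereas you derive it in two lines from the critical-point equation itself, playing $F_p(\lambda_0+\delta,y_0)=F_p(\lambda_0+\delta,y_0)-F_p(\lambda_0+\delta,y_0+h)=O(|h|)$ against the lower bound $|F_p(\lambda_0+\delta,y_0)|\geq\frac{1}{2}|c|\,|\delta|^m$. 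This makes your argument self-contained where the paper leans on an external reference, and it subsumes the simple-root case $m=1$ uniformly, which the paper treats separately via the implicit function theorem. Two further minor upgrades on your side: you establish genuine (Fr\'{e}chet) differentiability with a full vector increment $h$ rather than the paper's coordinate-wise difference quotients $\varepsilon e_i$, and you state explicitly the step the paper leaves implicit, namely that continuity of the candidate gradient $\phi_i(y)=F_{y_i}(\lambda(y),y)$ upgrades pointwise differentiability to $C^1$. The only points worth making explicit in a final write-up are that your lower bound uses that $m$ is the \emph{exact} multiplicity, so that $c=g(\lambda_0)\neq 0$ in the factorization $F_p(p,y_0)=(p-\lambda_0)^m g(p)$ (this is how both you and the paper define $m$, so no issue), and that the $O(|h|)$ bound on the $y$-variation of $F_p$ uses that $\lambda_0+\delta$ stays in a compact set, which is immediate from continuity of $\lambda$.
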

This theorem states that though in a smooth family of polynomials
the critical points are not smooth in general (due to collisions)
but the corresponding critical values are at least $C^1$.

Coming back to the formulation of Theorem 1.1, we shall denote by $\rho$ and $r_i$ the critical values of the polynomial $F$:
 $$ \rho(t,x):=F(\mu(t,x),x,t),\quad r_i(t,x):=F(\lambda_i(t,x),x,t)$$
It then
follows from Theorem 2.1 and the equation (\ref{eq:conservation})
that under the conditions of Theorem 1.1 the functions $r_i(t,x)$
and $\rho(t,x)$ are Riemann invariants for system (\ref{system}):
\begin{corollary}
Under the assumptions of Theorem 1.1 the functions $\rho$, $r_i$ are
$C^1$ and satisfy the equations
$$
  \rho_t+\mu(t,x)\rho_x=0,$$
$$ (r_i)_t+\lambda_i(t,x)(r_i)_x=0.
$$
\end{corollary}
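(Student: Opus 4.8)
The plan is to exploit the defining property of the critical points, namely that $\mu$ and the $\lambda_i$ are roots of $F_p$, together with the fact that the conservation law (\ref{eq:conservation}) holds as a polynomial identity in $p$ once $U$ solves (\ref{system}). Two ingredients are therefore in hand before any computation: first, $F_p(\mu,x,t)=0$ and $F_p(\lambda_i,x,t)=0$ for all $(t,x)$; second, the relation $F_t+pF_x-u_xF_p=0$ may be substituted with any value of $p$, in particular with the (possibly complex) critical points. The analytic content, i.e. the legitimacy of differentiating the critical values, will be supplied entirely by Theorem 2.1.

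I would carry out the main computation first for the real critical value $\rho(t,x)=F(\mu(t,x),x,t)$. Differentiating by the chain rule gives $\rho_x=F_p(\mu,x,t)\,\mu_x+F_x(\mu,x,t)$ and $\rho_t=F_p(\mu,x,t)\,\mu_t+F_t(\mu,x,t)$; since $F_p(\mu,x,t)=0$ the first terms vanish, leaving $\rho_x=F_x(\mu,x,t)$ and $\rho_t=F_t(\mu,x,t)$. Evaluating the identity $F_t+pF_x-u_xF_p=0$ at $p=\mu$ and using $F_p(\mu,x,t)=0$ once more yields $F_t(\mu,x,t)+\mu\,F_x(\mu,x,t)=0$, so that $\rho_t+\mu\rho_x=0$. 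The argument for the complex values $r_i(t,x)=F(\lambda_i(t,x),x,t)$ is formally identical, giving $(r_i)_x=F_x(\lambda_i,x,t)$, $(r_i)_t=F_t(\lambda_i,x,t)$ and hence $(r_i)_t+\lambda_i(r_i)_x=0$; that the $r_i$ are $\C$-valued causes no difficulty, since $F$ and its derivatives are polynomial, hence holomorphic, in $p$, and (\ref{eq:conservation}) persists under the complex substitution $p=\lambda_i$ coefficient by coefficient.

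The one genuine obstacle, and the point where Theorem 2.1 is indispensable, concerns the complex roots $\lambda_i$: they are not assumed distinct and may collide along some locus in the $(t,x)$-plane, where they cease to be differentiable and the product $F_p(\lambda_i)\cdot(\lambda_i)_x$ becomes an indeterminate $0\cdot\infty$. I would resolve this by observing that on the open set where the $\lambda_i$ are pairwise distinct they are smooth (by the implicit function theorem applied to $F_p=0$), so the computation above is valid verbatim and the transport equation holds there. Theorem 2.1 then guarantees that $r_i$ is $C^1$, so $(r_i)_t,(r_i)_x$ are continuous, and $\lambda_i$ is continuous by hypothesis; the two continuous expressions $(r_i)_t+\lambda_i(r_i)_x$ coincide with $0$ on the distinct-root set and hence on its closure. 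In fact the proof of Theorem 2.1 supplies the gradient formula $\nabla r_i=\bigl(F_t(\lambda_i,x,t),\,F_x(\lambda_i,x,t)\bigr)$ at every point, including collisions, which makes this extension immediate. Finally, $\rho$ needs none of this care: since $\mu$ is the unique real root while the $\lambda_i$ are non-real for every $(t,x)$, $\mu$ is always a simple root and therefore smooth, so $\rho_t+\mu\rho_x=0$ holds without any limiting argument.
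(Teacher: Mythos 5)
Your proposal is correct and takes essentially the same route as the paper, which proves the corollary exactly by combining the $C^1$ regularity and the gradient formula $\partial_{y_i}r=\partial_{y_i}F(\lambda(y_0),y_0)$ established in the proof of Theorem 2.1 with the substitution of $p=\mu$ and $p=\lambda_i$ into the conservation identity (\ref{eq:conservation}), using $F_p=0$ at the critical points. Note only that your interim density argument over the distinct-root locus is superfluous (and by itself would not cover a possible open set of collisions of the $\lambda_i$); as you observe in your closing remark, the gradient formula holds pointwise everywhere, so the transport equations follow directly at every point.
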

\begin{proof}[Proof of Theorem 2.1]
One needs to show that partial derivatives of $r$ are continuous at
every point $y_0$. If $\lambda(y_0)$ is a simple root of $F_p$, that
is $F_{pp}(\lambda(y_0),y_0)\neq 0$ then the statement is obvious
because in this case it follows from the implicit function theorem
for $F_p(\lambda(y),y)=0$ that $\lambda(y)$ is continuously
differentiable in a neighborhood of $y_0$ and hence also $r(y)$ as a
superposition $F(\lambda(y),y)$. Moreover we have
$$ \partial_{y_i} r(y_0)=F_p(\lambda(y_0), y_0)+\partial_{y_i}
F(\lambda(y_0),
y_0)=\partial_{y_i}F(\lambda(y_0), y_0).
$$
We shall prove that also in general case, when $\lambda(y_0)$ is a
root of multiplicity $m>1$ the same formula holds true. We use for
this the fact that $\lambda(y)$ is H\"{o}lder continuous at a
neighborhood of $y_0$ of order $1/m$ (see \cite {polynom}). We
have for every coordinate vector $e_i$:
$$\frac{r(y_0+\varepsilon e_i)-r(y_0)}{\varepsilon}=
\frac{F(\lambda(y_0+\varepsilon e_i),y_0+\varepsilon
e_i)-F(\lambda(y_0), y_0)}{\varepsilon}=A+B,
$$
where we write:
$$
A:=\frac{F(\lambda(y_0+\varepsilon e_i),y_0+\varepsilon
e_i)-F(\lambda(y_0+\varepsilon e_i),y_0)}{\varepsilon},
$$
$$
B:=\frac{F(\lambda(y_0+\varepsilon
e_i),y_0)-F(\lambda(y_0),y_0)}{\varepsilon}.
$$
It follows from mean value theorem and the continuity of
$\lambda(y)$ that
$$
\lim_{\varepsilon\rightarrow 0} A=\partial_{y_i}F(\lambda(y_0),
y_0).
$$
Next we need to show that $B$ tends to $0$ as
$\varepsilon\rightarrow 0$. This goes as follows. Since
$\lambda(y_0)$ is a root of multiplicity $m>1$ for $F_p(p, y_0)$ it
then follows that the polynomial $F(p, y_0)$ can be written as
$$
F(p, y_0)=(p-\lambda(y_0))^{m+1}f(p)+r(y_0)
$$
for some polynomial $f$. Therefore the term $B$ can be written as:
$$
B=\frac{(\lambda(y_0+\varepsilon
e_i)-\lambda(y_0))^{m+1}f(\lambda(y_0+\varepsilon
e_i))}{\varepsilon}
.$$
Since $\lambda$ is H\"{o}lder with exponent $\frac{1}{m}$ then
$$|B|\leq\frac{ C \varepsilon^{\frac{m+1}{m}}}{\varepsilon}=
C\varepsilon^{\frac{1}{m}}.
$$
Thus $$\lim_{\varepsilon\rightarrow 0} B=0.$$ This completes the
proof.

\end{proof}
\section{Maximum principle for complex Riemann invariants}
Let $r_j(t,x)$ be a Riemann invariant corresponding to the complex
eigenvalue $\lambda_j(t,x)$. We have:

\begin{theorem} The complex Riemann invariants $r_j$ are constant
functions on the whole torus $\mathbb{T}^2$ for every $j=1,..,l$.
\end{theorem}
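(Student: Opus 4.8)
The plan is to use the Corollary of Section 2, which tells us that each complex Riemann invariant $r=r_j$ is a $C^1$ function on $\mathbb{T}^2$ obeying the linear transport equation $r_t+\lambda r_x=0$ with $\lambda=\lambda_j$ nowhere real. First I would note that, by the continuous factorization hypothesis of Theorem 1.1 together with the connectedness of $\mathbb{T}^2$, the continuous function $\mathrm{Im}\,\lambda$ never vanishes and hence keeps a constant sign; passing to the conjugate pair if necessary, I may assume $\mathrm{Im}\,\lambda>0$ throughout. Being continuous on the compact torus, $\lambda$ then takes values in a fixed compact subset of the open upper half-plane.

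Next I would recast this first-order equation as a Beltrami equation. With $z=x+\I t$ one has $\partial_t+\lambda\partial_x=(\lambda+\I)\partial_z+(\lambda-\I)\partial_{\bar z}$, so $r$ satisfies $\partial_{\bar z}r=\nu\,\partial_z r$ with $\nu=-(\lambda+\I)/(\lambda-\I)$. Writing $\lambda=a+\I b$ gives $|\nu|^2=(a^2+(b+1)^2)/(a^2+(b-1)^2)$, which is strictly greater than $1$ precisely because $b>0$. Conjugating the equation, the function $s:=\bar r$ then solves a genuine Beltrami equation $\partial_{\bar z}s=\mu\,\partial_z s$ with $\mu=-(\bar\lambda+\I)/(\bar\lambda-\I)$ and $|\mu|=|\lambda-\I|/|\lambda+\I|<1$. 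Here the denominator $\bar\lambda-\I$ cannot vanish (that would force $\lambda=-\I$, excluded by $b>0$), so $\mu$ is continuous, and by compactness $\norm{\mu}_\infty<1$; the equation is uniformly elliptic.

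Finally I would conclude by Stoilow factorization. Since $\mu$ is periodic in $x$ and $t$, it lifts to a $\mathbb{Z}^2$-invariant Beltrami coefficient on $\mathbb{C}$, and the measurable Riemann mapping theorem furnishes a quasiconformal homeomorphism $\Phi$ of $\mathbb{C}/(\mathbb{Z}+\I\mathbb{Z})$ onto a compact elliptic curve $E'$, with respect to whose complex structure $s=H\circ\Phi$ for a holomorphic $H$ on $E'$. A holomorphic function on a compact Riemann surface is constant, so $s$, and hence $r_j$, is constant on $\mathbb{T}^2$. Equivalently, this factorization equips solutions of the Beltrami equation with the strong maximum modulus principle; since $\mathbb{T}^2$ has no boundary, $|r_j|$ attains an interior maximum and is therefore forced to be constant.

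The step I expect to be the main obstacle is reconciling the argument with the limited regularity supplied by Theorem 2.1, which gives only $r_j\in C^1$ and not $C^2$. This rules out the naive route of differentiating the transport equation a second time to produce a classical second-order elliptic equation and applying Hopf's lemma; it is exactly to avoid this that I pass to the first-order Beltrami formulation, whose factorization and maximum modulus principle require only $W^{1,p}_{\mathrm{loc}}$ solutions. The remaining points needing care are essentially bookkeeping: the constancy of the sign of $\mathrm{Im}\,\lambda$, guaranteed by continuous factorization, and the verification that the working Beltrami coefficient stays continuous with $\norm{\mu}_\infty<1$ because its only possible singularity sits in the lower half-plane, which $\lambda$ never reaches.
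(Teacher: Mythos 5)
Your proposal is correct and takes essentially the same route as the paper: both use Corollary 2.2 to obtain a global $C^1$ solution of the transport equation $r_t+\lambda r_x=0$, rewrite it as a uniformly elliptic Beltrami equation so that the map is quasiconformal/quasiregular, and invoke the representation (Stoilow) theorem to apply the maximum principle on the compact torus. Your write-up merely makes explicit what the paper delegates to the references of Bers and Ahlfors --- the constant sign of $\mathrm{Im}\,\lambda$, the conjugation $s=\bar r$ ensuring $\norm{\mu}_\infty<1$, and the descent to a compact elliptic curve --- so no substantive difference remains.
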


For simplicity we shall omit in this section the index $j$. 

Let
$r=u(t,x)+iv(t,x),\ \lambda=a(t,x) +i b(t,x), b>0$, then the
equation $(r)_t+\lambda\cdot(r)_x=0$ leads to the system of equations:
$$
\begin{cases}
  u_t+au_x-bv_x=0 \\
  v_t+bu_x+av_x=0.
 \end{cases}
$$

It is equivalent to the following:
\begin{equation}
\label{beltrami}
\begin{cases}
 u_t=\frac{a}{b}v_t+\frac{a^2+b^2}{b}v_x\\
-u_x=\frac{1}{b}v_t+\frac{a}{b}v_x.
\end{cases}
\end{equation}

This is the famous Beltrami system of equations corresponding to the
Riemannian metric
$$
ds^2=(a^2+b^2)dt^2-2adtdx+dx^2.
$$
We know that it has continuous coefficients since $\lambda(t,x)$ is
continuous by the assumptions. Moreover since $r$ is $C^1-$function
(due to Corollary 2.2) we have that $(u,v)$ is a global $C^1$
solution of the system (\ref{beltrami}). Therefore, the mapping
$(t,x)\rightarrow(u,v)$ is quasi-conformal. Then, it follows from representation theorem
for quasi-conformal maps that the maximum principle applies and
hence $(u,v)$ are constant functions on the whole torus
$\mathbb{T}^2$ (otherwise they have a point of maximum
somewhere on the torus). We refer to \cite{bers} p. 44 --
48 and \cite {ahlfors} for the details on the theory of quasi-conformal maps.
\begin{remark}
Notice that a-priori it is not true that the functions $r_j$ are
functions on the torus, but this can be achieved passing to a
suitable finite cover of the torus.
\end{remark}
\section{Using critical values as a system of local coordinates}
Consider the space $\mathbb{C}^n$ of all polynomials of degree
$(n+1)$ of the form
$$F=\frac{1}{n+1}p^{n+1}+u_1p^{n-1}+...+u_n,$$
with (complex) coefficients $u_i$. Denote by $\Lambda_{\bold{1}}\subset
\mathbb{C}^n$ the subset of polynomials with Morse critical points.
The complement to  $\Lambda_{\bold{1}}$ consists of those polynomials that the
derivative has a multiple root. We define the strata
$$\Lambda_{m_1...m_k}\subset
\mathbb{C}^n,\  \sum_{i=1}^k m_i=n,\  m_1\geq...\geq m_k\geq 1$$ to be the subset of those
polynomials $F$ such that the derivative $F_p$ has precisely $k$
distinct roots
of the multiplicities $m_i$.
The following lemma used in many papers (see for instance \cite{dubrovin}, \cite{strachan}) We could not find however a proof in the literature.
The proof below was communicated to us by Eugene Shustin.
\begin{lemma}
The critical values $r_i=F({\lambda_i}), i=1,...,k$ form a system of
local coordinates on $\Lambda_{m_1...m_k}$.
\end{lemma}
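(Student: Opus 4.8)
The plan is to show that the holomorphic map
\[
\Phi\colon \Lambda_{m_1\ldots m_k}\longrightarrow\mathbb{C}^k,\qquad F\longmapsto (r_1,\ldots,r_k),
\]
is a local biholomorphism. First I would record the dimension: sending $F$ to the distinct roots $(\lambda_1,\ldots,\lambda_k)$ of $F_p$ identifies $\Lambda_{m_1\ldots m_k}$, locally, with an open set of tuples of distinct points subject to the single relation $\sum_j m_j\lambda_j=0$ (this relation is exactly the vanishing of the $p^{n-1}$-coefficient of $F_p$, i.e. the normalization $u_0=0$), together with the free additive constant $u_n$. This gives $(k-1)+1=k$ parameters, shows $\Lambda_{m_1\ldots m_k}$ is a $k$-dimensional complex manifold, and shows the $\lambda_j$ depend holomorphically on $F$, so that each $r_i=F(\lambda_i)$ is a well-defined local holomorphic function. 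Since source and target both have dimension $k$, by the inverse function theorem it suffices to prove that $d\Phi$ is injective at every point.

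Next I would compute $d\Phi$. A tangent vector to $\Lambda_{m_1\ldots m_k}$ at $F$ is a polynomial $\dot F$ of degree $\le n-1$ coming from a curve $F_\varepsilon$ that preserves the multiplicity pattern; differentiating $F_p=\prod_j(p-\lambda_j)^{m_j}$ gives $\partial_p\dot F=-\sum_s m_s\dot\lambda_s\,(p-\lambda_s)^{m_s-1}\prod_{j\neq s}(p-\lambda_j)^{m_j}$, so that such a $\dot F$ is characterized by the divisibility
\[
\prod_{j=1}^k (p-\lambda_j)^{m_j-1}\ \bigm|\ \partial_p\dot F .
\]
For the critical value $r_i=F(\lambda_i)$ the chain rule yields $\dot r_i=\dot F(\lambda_i)+F_p(\lambda_i)\,\dot\lambda_i$, and since $\lambda_i$ is a critical point the second term vanishes, leaving the clean formula $\dot r_i=\dot F(\lambda_i)$ — the same mechanism already used in the proof of Theorem 2.1.

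The heart of the argument is then a vanishing-order and degree count. Suppose $\dot F$ lies in the kernel of $d\Phi$, i.e. $\dot F(\lambda_j)=0$ for all $j$. The tangency condition says $\partial_p\dot F$ vanishes to order $\ge m_j-1$ at $\lambda_j$, equivalently $\dot F',\dot F'',\ldots,\dot F^{(m_j-1)}$ all vanish at $\lambda_j$; combined with $\dot F(\lambda_j)=0$ this forces $\dot F$ to vanish to order $\ge m_j$ at each $\lambda_j$. As the $\lambda_j$ are distinct, $\prod_j(p-\lambda_j)^{m_j}=F_p$ divides $\dot F$. But $\deg\dot F\le n-1<n=\deg F_p$, so $\dot F=0$. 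Hence $d\Phi$ is injective, and the theorem follows.

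The step I expect to require the most care is the correct identification of the tangent space to the stratum — in particular verifying that preserving the multiplicity pattern is precisely the divisibility $\prod_j(p-\lambda_j)^{m_j-1}\mid\partial_p\dot F$, and that this matches the dimension count. Once this is in place the decisive observation is elementary: ``value zero'' plus ``derivative vanishing to order $m_j-1$'' equals ``$\dot F$ vanishing to order $m_j$'', which packages all $k$ conditions into divisibility by $F_p$. The only thing to watch is the degree bookkeeping, namely that $\dot F$ has degree at most $n-1$ while $F_p$ has degree $n$; this discrepancy is exactly what excludes a nonzero kernel vector.
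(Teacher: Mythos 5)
Your proof is correct, and its decisive step is the same as the paper's: both arguments identify the tangent space to the stratum at $F_0$ with the polynomials $\dot F$ of degree $\le n-1$ whose $p$-derivative vanishes to order $m_i-1$ at each $\lambda_i$, and both kill the kernel of the evaluation map $\dot F\mapsto(\dot F(\lambda_1),\ldots,\dot F(\lambda_k))$ by the identical counting argument (value zero plus derivative vanishing to order $m_i-1$ forces vanishing to order $m_i$, hence $\prod_i(p-\lambda_i)^{m_i}$ of degree $n$ divides a polynomial of degree $\le n-1$, so $\dot F=0$; in the paper this appears as the difference $H$ of two preimages). Where you genuinely diverge is in how the smooth structure and tangent space of $\Lambda_{m_1\ldots m_k}$ are established. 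The paper (following Shustin) projects by the derivation $D\colon\C^n\to\C^{n-1}$, studies the germ $V=D(\Lambda_{m_1\ldots m_k})$ at $G_0=\prod_i(p-\lambda_i)^{m_i}$, and proves smoothness by showing the $n-k$ linear equations $G^{(j)}(\lambda_i)=0$, $0\le j\le m_i-2$, are transversal (by augmenting them to a system with unique solution $G_0$); you instead exhibit an explicit holomorphic chart by the distinct roots subject to $\sum_j m_j\lambda_j=0$ plus the free constant $u_n$, and read off the tangent space by differentiating $F_p=\prod_j(p-\lambda_j)^{m_j}$ along curves. Your chart route is more elementary and hands you the holomorphic dependence of the $\lambda_j$ for free, though it quietly requires checking the parametrization is an immersion (easy: if $\sum_s m_s\dot\lambda_s(p-\lambda_s)^{m_s-1}\prod_{j\ne s}(p-\lambda_j)^{m_j}=0$, dividing by $\prod_j(p-\lambda_j)^{m_j-1}$ and evaluating at $\lambda_i$ gives $\dot\lambda_i=0$); the paper's transversality route describes the stratum intrinsically inside coefficient space without any parametrization. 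One small point of care in your write-up: you state the tangent space is \emph{characterized} by the divisibility $\prod_j(p-\lambda_j)^{m_j-1}\mid\partial_p\dot F$, but your curve computation only gives the inclusion of the tangent space into that divisibility locus (equality then follows since both have dimension $k$); fortunately the injectivity argument uses only this forward inclusion, so the proof is complete as it stands.
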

\begin{proof}
First let us remark that the claim is immediate if all the roots of
the derivative are of multiplicity 1.

In the general case proof goes as follows. First, notice the
relation
$$m_1\lambda_1+...+m_k\lambda_k=0,$$
and denote by $$D:\C^n\rightarrow\C^{n-1}$$ the linear projection
given by the derivation. In the space $\C^{n-1}$ of polynomials of
degree $n$, consider the germ $V=D(\Lambda_{m_1...m_k}$) at
$$G_0=\prod_{i=1}^k(p-\lambda_i)^{m_i}$$ of the family of
polynomials having $ k$ roots of multiplicities $m_1,..,m_k$
respectively. It follows that $V$ is smooth and its (affine-linear)
tangent space consisting  of the  polynomials $G$ which can be defined by the affine-linear equations on
$u_1,...,u_{n-1}$:
$$G^{(j)}(\lambda_i)=0, 0\leq j\leq m_i-2, i=1,...,k,$$
(here $\lambda$'s are fixed and the equation appears only when
$m_i\geq 2$). These $n-k$ linear equations are transversal in
$\C^{n-1}$, since adding $k-1$ more equations
$$G^{(m_i-1)}(\lambda_i)=0, i=1,...,k-1,$$
we obtain a system of $(n-1)$ linear equations evidently having the
unique solution $G_0$.

Hence, $\Lambda_{m_1...m_k}=D^{-1}(V)$ is a smooth germ of a
$k$-dimensional subvariety in $\C^n$, and its (affine-linear) tangent
space $W$ at each point $F_0\in D^{-1}(G_0)$ is given by $(n-k)$
(transversal) linear equations
$$W =\{F^{(j)}(\lambda_i)=0, 1\le j\le m_i-1, i=1,...,k\}.$$
Consider the map $R$ taking $F\in D^{-1}(V)$ to the set $S \in\C^k$
of critical values of $F$. The differential of $R$ at $F_0$ acts
from $W$ to $\C^k$ by formula
 $$W\ni F \mapsto (F(\lambda_1),...,F(\lambda_k)).$$
This is an affine-linear isomorphism, since the preimage of each
point is unique: the difference $H$ of two preimages of the same point
must be a polynomial of degree $\le n-1$ satisfying equations
$$H^{(j)}(\lambda_i)=0, 0\le j\le m_i-1, i=1,...,k,$$
hence $H=0$.
\end{proof}
This lemma implies the following property.
\begin{theorem}
Suppose $F(p,t)=\frac{1}{n+1}p^{n+1}+u_1(t)p^{n-1}+...+u_n(t), \
t\in I$ be a smooth curve in the space of polynomials
such that the set of critical values of $F(\cdot,t)$ does not depend
on $t$. Then $F$ is a polynomial with constant coefficients.
\end{theorem}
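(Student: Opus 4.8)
The plan is to reduce everything to Lemma 4.1: on each stratum $\Lambda_{m_1\dots m_k}$ the critical values serve as local coordinates, so if the curve $t\mapsto F(\cdot,t)$ can be shown to remain inside one fixed stratum, then constant critical values are constant coordinates and the curve must itself be constant. Thus the whole difficulty is to control how the multiplicity pattern of the critical points of $F(\cdot,t)$ may vary with $t$.

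First I would record the hypothesis in algebraic form. Let $S=\{s_1,\dots,s_d\}$ be the finite set of distinct critical values, which by assumption is the same for every $t$. Every critical point $\mu$ of $F(\cdot,t)$, say of multiplicity $\nu$ as a root of $F_p(\cdot,t)$, has value $F(\mu,t)$ equal to some $s_a$, and is therefore a root of $F(\cdot,t)-s_a$ of multiplicity $\nu+1$. Collecting these divisibilities over all critical points gives $F_p(p,t)\mid\prod_{a=1}^{d}(F(p,t)-s_a)$ as polynomials in $p$, for every $t$ and with the same constants $s_a$.

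The key step, which I expect to be the main obstacle, is to prove that the stratum type of $F(\cdot,t)$ is locally constant. Fix $t_0$, let $\lambda_1,\dots,\lambda_k$ be the critical points of $F_0:=F(\cdot,t_0)$ with multiplicities $m_1,\dots,m_k$ and values $s_{a(i)}$, and choose disjoint discs $D_i$ around $\lambda_i$ in the $p$-plane so small that $D_i$ contains no other critical point and no other root of $F_0-s_{a(i)}$. Then $F_0'$ has exactly $m_i$ roots in $D_i$ while $F_0-s_{a(i)}$ has exactly $m_i+1$, and by continuity of the roots these counts persist for $t$ close to $t_0$. Any critical point of $F(\cdot,t)$ in $D_i$ lies near $\lambda_i$, so its value lies near $s_{a(i)}$; since this value belongs to the fixed discrete set $S$ it must equal $s_{a(i)}$, making the critical point a root of $F(\cdot,t)-s_{a(i)}$ of multiplicity one greater than its multiplicity as a critical point. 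If the critical points in $D_i$ were $\mu_1,\mu_2,\dots$ with multiplicities $\nu_1,\nu_2,\dots$ (summing to $m_i$), their total contribution to the roots of $F(\cdot,t)-s_{a(i)}$ in $D_i$ would be $\sum_b(\nu_b+1)=m_i+(\text{number of distinct }\mu_b)$, which cannot exceed the available $m_i+1$. Hence $D_i$ contains a single critical point, necessarily of full multiplicity $m_i$ and critical value exactly $s_{a(i)}$. Since the $n=\sum_i m_i$ critical points are all accounted for inside the discs, $F(\cdot,t)$ keeps exactly $k$ critical points of multiplicities $m_1,\dots,m_k$; that is, the curve remains in $\Lambda_{m_1\dots m_k}$.

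Finally I would conclude. The argument just given shows that for $t$ near $t_0$ the $i$-th critical value of $F(\cdot,t)$ equals the constant $s_{a(i)}$, so the tuple of critical values is constant along the curve. By Lemma 4.1 this tuple is a local coordinate system on $\Lambda_{m_1\dots m_k}$, and the curve lies in this stratum; constancy of the coordinates therefore forces $F(\cdot,t)$ to be constant in a neighbourhood of $t_0$. As $t_0$ was arbitrary and the parameter interval is connected, $F$ has constant coefficients. (Theorem 2.1 guarantees that the critical values vary at least $C^1$, but for the present argument only their membership in the fixed discrete set $S$ is used.)
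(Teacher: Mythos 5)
Your proof is correct, but it takes a genuinely different route from the paper at the crucial point. Both arguments funnel through Lemma 4.1 (critical values as local coordinates on a stratum), but the paper never shows that the curve stays in a single stratum. Instead it considers the open set $U\subset I$ which is the union of maximal subintervals on which $F(\cdot,t)$ lies in one fixed stratum, concludes $F_t\equiv 0$ on $U$ by Lemma 4.1, asserts (without detailed justification) that $I\setminus U$ has empty interior, and then extends $F_t\equiv 0$ to all of $I$ by continuity of $F_t$ --- an argument that genuinely uses smoothness of the curve. You instead prove the stronger statement that the multiplicity pattern of the critical points is \emph{locally constant along the curve}: your Rouch\'e-type count --- exactly $m_i$ zeros of $F_p(\cdot,t)$ versus exactly $m_i+1$ zeros of $F(\cdot,t)-s_{a(i)}$ in the disc $D_i$, together with the observation that every nearby critical value must land back in the fixed finite set $S$, forcing $\sum_b(\nu_b+1)=m_i+\#\{\mu_b\}\le m_i+1$ --- is exactly the right bookkeeping, and it pins each critical value to $s_{a(i)}$ exactly, which is what you need to feed the tuple (not just the set) of critical values into Lemma 4.1. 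Your approach buys three things: it makes the density assertion of the paper unnecessary (you show $U=I$ outright); it needs only continuity of $t\mapsto F(\cdot,t)$ rather than smoothness, since no appeal to $F_t$ is made; and it yields the extra structural conclusion that the stratum cannot jump under the hypothesis, which is of independent interest. The paper's argument is shorter modulo its unproved empty-interior claim (which can be repaired via lower semicontinuity of the number of distinct critical points), but yours is the more self-contained and slightly more general proof.
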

\begin{proof}First let us factorize the derivative $F(p,t)$:

$$
F_p(t)=\prod_i^n(p-\lambda_i(t)),
$$
where $\lambda_i(t)$ are continuous (and not necessarily distinct).
Denote by $r_i$ the corresponding critical values. Suppose that
for some open sub interval $J\subset I$ the polynomial $F(\cdot,t),\
t\in J$ lies on one stratum $\Lambda_{m_1...m_k}$. Then it follows
from the Lemma that  $F(p,t)$ is a polynomial with constant
coefficients for all $t\in J$. Indeed, this holds locally by
Lemma 4.1 and therefore on the whole $J$. So the derivative $F_t\equiv
0$ on $J$. Denote by $U$ the open subset of $I$ which is the union
of all those maximal sub-intervals for which $F(\cdot,t)$ belongs to
one stratum. As we explained above on $U$ we have $F_t\equiv 0$.
Moreover, it follows that the complement $I\backslash U$ has no
interior points and thus every point of $I\backslash U$ can be
approached by a sequence of points from $U$ and hence by continuity
$F_t$ must vanish on the whole $I$.
\end{proof}

\section{Traveling wave solutions for the system}
Now we are in position to apply the result of the previous section to the proof of Theorem 1.1.
The first step is the following
\begin{theorem}
Let $U$ be a non-constant periodic solution of (\ref{system}), then
the real eigenvalue $\mu(t,x)$ is a constant number on the whole
torus and the solution $U$ is a traveling wave solution for the
system (\ref{system}), $U=U(x-\mu t)$.
\end{theorem}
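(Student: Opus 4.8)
The plan is to combine the rigidity of the complex Riemann invariants with Theorem~4.2, applied along the characteristic of the real eigenvalue, and then to globalize geometrically. First I would record a regularity fact: under the hypotheses the real root $\mu$ of $F_p$ is simple, since among the $n=2l+1$ roots exactly $2l$ are complex, so $F_{pp}(\mu)\neq 0$. By the implicit function theorem $\mu(t,x)$ is then as smooth as the coefficients $U$; in particular the characteristic field $\dot x=\mu(t,x)$ is well defined and through every point of $\mathbb{R}^2$ there passes a unique characteristic.

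Next I would assemble all the information about the critical values. By Theorem~3.1 each complex Riemann invariant $r_j$ is constant on $\mathbb{T}^2$, and since $F$ has real coefficients the conjugate critical values $\bar r_j$ are constant as well; thus $2l$ of the $2l+1$ critical values are globally constant. For the remaining real critical value $\rho$, Corollary~2.2 gives the transport equation $\rho_t+\mu\rho_x=0$, so $\rho$ is constant along each characteristic of $\mu$. Consequently, along any fixed characteristic $t\mapsto(t,x(t))$ the entire set of critical values of $F(p,t,x(t))$ is independent of $t$.

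I would then restrict $F$ to such a characteristic, obtaining a smooth one–parameter curve in the space of polynomials whose set of critical values does not vary with the parameter. Theorem~4.2 applies and yields that this curve has constant coefficients; that is, $U$ is constant along every characteristic of $\mu$. Since the matrix $A$ depends only on $U$, the eigenvalue $\mu$ is a function of $U$ alone, hence $\mu$ is also constant along each characteristic; as each characteristic satisfies $\dot x=\mu=\mathrm{const}$ along itself, it is a straight line whose slope equals the value of $\mu$ on it.

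Finally I would globalize on the universal cover $\mathbb{R}^2$, where the characteristics foliate the plane by straight lines with $U$ constant on each. If two such lines had different slopes they would meet at a point $Q$; evaluating $U$ at $Q$ along either line forces the two values of $U$, and hence the two slopes $\mu(U)$, to coincide, a contradiction. Therefore all characteristics are parallel with one common slope $\mu_0$, so $\mu\equiv\mu_0$ is a constant number and $U$ depends only on $x-\mu_0 t$, i.e. $U=U(x-\mu_0 t)$. I expect the main obstacle to be precisely this passage from \emph{constant along characteristics} to a genuine single–speed traveling wave: one must know the characteristics are honest straight lines (which rests on the simplicity and regularity of $\mu$) and then exclude a continuum of distinct slopes, which the intersection argument on the universal cover accomplishes. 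The application of Theorem~4.2 along a single characteristic is the conceptual heart that turns the global constancy of the complex invariants into this conclusion.
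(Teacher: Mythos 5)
Your proposal is correct and follows essentially the same route as the paper: transport of the real invariant $\rho$ along the characteristics of $\mu$, global constancy of the complex invariants via Theorem~3.1, Theorem~4.2 applied along each characteristic to make $U$ (hence $\mu$) constant there, straight-line characteristics, and a non-intersection argument forcing all lines to be parallel. Your only deviations are minor refinements of the same argument --- you justify the regularity of $\mu$ by noting the real root is simple (the paper leaves this implicit), and you rule out intersecting characteristics by evaluating $U$ at the crossing point to force equal slopes, where the paper simply invokes uniqueness for the ODE $\dot x=\mu(t,x)$.
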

\begin{proof}
Denote by $\rho$ the Riemann invariant  corresponding to the real
eigenvalue $\mu$, i.e. $\rho(t,x)=F(\mu (t,x),x,t)$. Function $\rho$
satisfies
$$
\rho_t+\mu(t,x)\rho_x=0.
$$
Notice this equation means that $\rho$ must have constant values
along characteristic curves. These are integral curves of the
equation
$$ \dot{x}=\mu(t,x).
$$
Therefore, using Theorem 3.1, we conclude that along every
characteristic curve all Riemann invariants preserve constant
values. It then follows from Theorem 4.2 of the previous section
that all the coefficients $u_i(t,x), i=1,..,n$ are constants along
every characteristic curve. Then $\mu$ as a function of $u_i$-s is
also constant along every characteristic. But $\mu$ is the slope of
the tangent line to the characteristic, so every characteristic
curve must be a straight line. Moreover all these lines must be
parallel, since otherwise they intersect, which is impossible for
solutions of ODE. So $\mu$ is a constant. This completes the proof.

\end{proof}
The next step is to show that the only traveling wave solutions for system
(\ref{system}) are autonomous:
\begin{theorem}Let $U=U(x-\mu t)$ be a traveling wave solution of
(\ref{system}).
Then $\mu\equiv 0$, and the polynomial $F$ is a function of the
Hamiltonian $H$.
\end{theorem}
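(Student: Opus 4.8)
The plan is to pass to the moving frame, recognize $F$ as a polynomial‑in‑momentum first integral of an autonomous one–degree–of–freedom Hamiltonian, invoke the classical fact that such integrals are functions of the energy, and finally read off $\mu=0$ from the normalization $u_0=0$.

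First I would substitute the travelling wave ansatz into the conservation law (\ref{eq:conservation}). Writing $\xi=x-\mu t$ (recall that $\mu$ is already known to be constant by Theorem 5.1), every coefficient $u_i$ depends on $\xi$ only, so $F_t=-\mu F_\xi$ and $F_x=F_\xi$, where $F_\xi=\sum_i u_i'(\xi)p^{n-i}$. Substituting into $F_t+pF_x-u_xF_p=0$ and using $u_x=u_1'(\xi)$ gives the single polynomial identity $(p-\mu)F_\xi=u_1'F_p$. Introducing the shifted momentum $s=p-\mu$ and the Hamiltonian $\tilde H=\tfrac12 s^2+u_1(\xi)$, and noting that the shift $p\mapsto p-\mu$ by a constant preserves the symplectic form, this identity is exactly $\{F,\tilde H\}=F_\xi\tilde H_p-F_p\tilde H_\xi=0$. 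Thus $F$, viewed as a polynomial in $s$ with $\xi$‑dependent coefficients, is a first integral of the natural mechanical system with potential $u_1(\xi)$.

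The central step is the classical statement that a momentum‑polynomial integral of $\tfrac12 s^2+V(\xi)$ with non‑constant potential $V$ must be a polynomial in $\tilde H$. I would prove this by expanding $F=\sum_k a_k(\xi)s^k$ and collecting powers of $s$ in $\{F,\tilde H\}=0$, which yields the recursion $a_{k-1}'=(k+1)V'a_{k+1}$ together with the boundary relation $V'a_1\equiv0$. The coefficients of even and odd index decouple; the top coefficient is constant, and the even chain integrates up to a polynomial in $\tilde H$. For the odd chain the boundary relation forces $a_1\equiv0$ wherever $V'\neq0$, and then the recursion propagates $a_3\equiv a_5\equiv\dots\equiv0$, so the odd part vanishes and $F=\Phi(\tilde H)$ for a one–variable polynomial $\Phi$. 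Here one uses that $V=u_1$ is non‑constant for a non‑constant solution (the case of interest, cf. Theorem 5.1), since a constant $u_1$ would make every $a_k'$ vanish and hence $U$ constant; the only delicate point is that $u_1$ is merely $C^2$, so $u_1'$ could vanish on a whole subinterval, and one must patch the conclusion $a_{\mathrm{odd}}\equiv0$ across such flat pieces by continuity of the $C^1$ coefficients.

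Finally I would extract the two conclusions. Because $F=\Phi(\tilde H)$ with $\tilde H=\tfrac12 s^2+u_1$ is even in $s$, all odd powers of $s$ are absent; in particular the coefficient of $s^n$ vanishes. On the other hand, substituting $p=s+\mu$ into $F=\tfrac1{n+1}p^{n+1}+0\cdot p^n+u_1p^{n-1}+\dots$ (the coefficient of $p^n$ is $u_0=0$ by the normalization) shows that the coefficient of $s^n$ equals $\mu$, since only the leading term $\tfrac1{n+1}(s+\mu)^{n+1}$ reaches degree $n$ and contributes $\tfrac1{n+1}\binom{n+1}{n}\mu=\mu$. As $n=2l+1$ is odd, $s^n$ is an odd power, whence $\mu=0$. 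With $\mu=0$ one has $\tilde H=\tfrac12 p^2+u_1=H$, and therefore $F=\Phi(H)$, as claimed. I expect the classification step to be the main obstacle, both because it carries the substantive analytic content and because of the $C^2$ regularity issue noted above.
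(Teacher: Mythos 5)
Your proposal is correct, but it takes a genuinely different route from the paper. The paper works directly with the profile ODEs $(A(U)-\mu I)\,U'=0$: it integrates this triangular system step by step, so that $u_2,\dots,u_n$ become polynomial expressions in $u_1$ with constant coefficients, and then closes the argument with the eigenvalue condition $F_p(\mu)=\mu^n+(n-1)u_1\mu^{n-2}+\dots+u_{n-1}=0$; substituting the polynomial expressions turns this into a constant-coefficient polynomial equation for $u_1$, whence either $\mu=0$ (and the same step-by-step integration exhibits $F$ as a polynomial in $H$) or $u_1$, and with it $U$, is constant. You instead pass to the Galilean frame $s=p-\mu$, recognize $\{F,\tilde H\}=0$ with $\tilde H=\tfrac12 s^2+u_1$, and run the classical even/odd coefficient recursion; note that your bottom relation is secretly the paper's closing equation, since $a_1(\xi)=F_s(0,\xi)=F_p(\mu)$ is exactly the characteristic polynomial evaluated at $\mu$ --- the two proofs exploit the same identity but conclude from it differently. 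What your route buys: a conceptual explanation of $\mu=0$ (evenness in the shifted momentum of polynomial integrals of natural one-degree-of-freedom systems) and the stronger conclusion that the entire odd chain vanishes. What it costs: your derivation of $\mu=0$ from the vanishing of the coefficient of $s^n$ uses that $n=2l+1$ is odd (legitimate under the standing hypotheses of Theorem 1.1, but the paper's dichotomy is parity-free), and you must worry about the degenerate set $\{u_1'=0\}$, which the paper's purely algebraic argument never sees. Your continuity patching across flat pieces does go through, but it can be avoided altogether: integrating the recursion $a_{k-1}'=(k+1)u_1'a_{k+1}$ downward from the constant top coefficients shows that each $a_k$ equals a fixed polynomial $P_k(u_1)$ with constant coefficients; then $u_1'P_1(u_1)\equiv 0$ forces $P_1$ to vanish on the nondegenerate interval of values that $u_1$ takes on any interval where $u_1'\neq 0$, hence $P_1\equiv 0$ as a polynomial and $a_1\equiv 0$ globally, and the same argument propagates up the odd chain without any case analysis on $\{u_1'=0\}$.
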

\begin{proof}
Let $U(x-\mu t)$ be a traveling wave solution for the system
(\ref{system}). Then $U^{'}(x)$ must be an eigenvector of the matrix
$A(U)$ and we have the following system of ordinary differential on
the components $u_i$:
$$
\begin{cases}
   \mu u_1^{'}\ \ = \quad \quad \quad\quad\quad \quad \quad u_2^{'}\\
   \mu u_2^{'}\ \ =-(n-1)u_1 u_1^{'}+u_3{'}\\
   \qquad \quad........\\
   \mu u_{n-1}^{'}=-2u_{n-2}u_1{'}+\ \  u_n^{'}\\
   \mu u_{n}^{'}\quad =\ -u_{n-1}u_1{'}
 \end{cases}
$$

Notice, that first $(n-1)$ equations of this system can be
integrated step by step starting from the first one and all
functions $u_2,...,u_n$ become polynomial expressions on $u_1$. We write an
additional equation using the fact that $\mu$ is an
eigenvalue of the matrix $A(U)$, so:
$$\mu^n+(n-1)u_1\mu^{n-2}+...+u_{n-1}=0.$$
Substituting polynomial expressions of $u_2,...,u_{n-1}$  into the
last equation we get the following alternative: either $\mu=0$, or
$u_1$ satisfies certain polynomial equation with constant
coefficients and therefore must be a constant. In the second case
all the components of the solution are constants. In the first case
we get that the Hamiltonian $H$ is autonomous, and $F$ turns out to be a
polynomial function of $H$. This completes the proof.
\end{proof}
\section{Hiperbolic 2 by 2 reduction for Benney chain}
In this section we derive a remarkable reduction of
system(\ref{system}) for $n=4$.

 It was shown in \cite{bialy1} that for
strictly hyperbolic case of the system (\ref{system}) the smallest
and the largest eigenvalues are genuinely non-linear. This implies
that the corresponding Riemann invariants are constants. This
motivates the following construction:

We are looking for the function
\begin{equation}
F=\frac{1}{5}p^5+u_1p^3+u_2p^2+u_3p+u_4,
\end{equation}
polynomial of degree 5
satisfying system (\ref{eq:conservation}), where $u_1=u$ is the potential. We write $F$ in the form:
\begin{equation}
F=\frac{1}{5}(p-f)^2(p-g)^2(p-a),
\end{equation}
where $f,g,a$ are some functions.
Equating the coefficients of $p^4$ and $p^3$ of the polynomials in (4) and (5)  we get
$$a=-(2f+2g),$$
$$
u=-\frac{1}{5}(3f^2+4fg+3g^2).
$$
In order to write the equations on $f,g$ notice that $p=f$ and $p=g$ are level sets of $F$
thus are invariant tori of the Hamiltonian system. Therefore the following equations hold:
$$
 \begin{cases}
  f_t+ff_x+(u)_x=0\\
  g_t+gg_x+(u)_x=0.
 \end{cases}
$$
Introduce
$$
\frac{f-g}{2}=h,\ \frac{f+g}{2}=q.
$$
We get the following equations on $h,q$
$$
  \begin{cases}
   h_t+(hq)_x=0\\
   q_t+\left(\frac{f^2}{4}+\frac{g^2}{4}-\frac{1}{5}(3f^2+4fg+3g^2)\right)_x=0.
  \end{cases}
$$
Rewriting the last equation in terms of $h$ and $q$ we come to the
system:
$$
 \begin{cases}
  h_t+(hq)_x=0\\
  q_t+\left(\frac{h^2}{10}-\frac{3}{2}q^2\right)_x=0.
 \end{cases}
$$
Finally changing $(h,q)\rightarrow(w,v),\ w=\frac{h}{\sqrt {5}}, \
v=q,$ we come to the system:
$$
\begin{cases}
 w_t+(wv)_x=0\\
 v_t+\left(\frac{w^2}{2}-\frac{3v^2}{2}\right)_x=0.
 \end{cases}
$$

 The matrix of this system reads
\[A(w,v)=\left(
\begin{array}{cc}
    v       & w \\
    w       & -3v

\end{array}
\right).
\]
The matrix is strictly hyperbolic away of the origin on the
$(w,v)$-plane. The eigenvalues are given by the following formula:
$$
\lambda_{1,2}= -v\pm R,\quad R:=\sqrt{4v^2+w^2}.
$$
In order to check  type of non-linearity one needs to check the sign
of the derivative $\frac{\partial \lambda_1}{\partial r_1}.$ The
eigenvector of $\lambda_1$ is given by $\xi_1=(w, R-2v)$. Therefore we
compute:
$$\frac{\partial \lambda_1}{\partial r_1}=d\lambda_1(\xi_1)=\partial_w(\lambda_1)w+\partial_v(\lambda_1)(R-2v)=$$
$$\quad = \frac{w^2}{R}+\left(-1+\frac{4v}{R}\right)(R-2v)=
\frac{6v(R-2v)}{R}.
$$
So we have the sign of the derivative $\frac{\partial
\lambda_1}{\partial r_1}$ equals that of $v$. This proves Theorem
1.2.

\end{document}